\newtheorem{theorem}{\bf{Theorem}}[section]
\newtheorem{problem}{\bf{Problem}}[section]
\newcommand{\footremember}[2]{%
    \footnote{#2}
    \newcounter{#1}
    \setcounter{#1}{\value{footnote}}%
}
\newcommand{\footrecall}[1]{%
    \footnotemark[\value{#1}]%
}
\title{A Simpler NP-Hardness Proof for Familial Graph Compression}
\author{Ammar Ahmed\footremember{itu}{Department of Computer Science, Information Technology University,  Pakistan} \footremember{contrib}{All authors contributed equally to this article.} \and Zohair Raza Hassan\footrecall{itu} \footrecall{contrib} \and Mudassir Shabbir\footrecall{itu} \footrecall{contrib}}
\date{{\footnote{\textit{Email addresses:} ammar.ahmed@itu.edu.pk (Ammar Ahmed), zohair.raza@itu.edu.pk (Zohair Raza Hassan), mudassir@rutgers.edu (Mudassir Shabbir).}}}
\begin{document}
% \Large
\maketitle

\begin{abstract}
    This document presents a simpler proof showcasing the NP-hardness of Familial Graph Compression.
\end{abstract}

\section{Introduction}

Familial Graph Compression (FGC) is a problem introduced in~\cite{ahmed2020interpretable}. The problem entails determining whether it is possible to convert a given graph $G$ to a target graph $H$ via a series of ``compressions'' based on the presence of certain sub-graphs in $G$, specified in a set $\mathcal{F}$. A complete definition is given in the next section. A single instance of FGC involves $G$, $H$, and $\mathcal{F}$ as input. This problem was proven to be \texttt{NP-complete} in~\cite{ahmed2020interpretable}:

\begin{theorem}
    \label{thm:FGC_NPC}
    The FGC problem is \texttt{NP-complete} when:
    \begin{enumerate}
        \item $G$ is simple graph on $n$ nodes, $H$ is the single node graph, and family $\mathcal{F}$ contains a single motif $C_n$ i.e. a cycle on $n$ nodes.
        \item $G$  is a simple graph on $n=3k$ nodes, $H$ is the single node graph, and $\mathcal{F}$ contains a single motif with $k$ disjoint triangles.
        \item $G$ is a simple graph, $H$ is a forest of isolated nodes, and $\mathcal{F}$ is a family of graphlets.
    \end{enumerate}
\end{theorem}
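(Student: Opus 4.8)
The plan is to prove both membership in \texttt{NP} and \texttt{NP}-hardness; since each part restricts the general problem, hardness must be shown separately for each, while a single membership argument covers all three. For membership, note that every compression strictly decreases the number of nodes, so a successful conversion of $G$ to $H$ uses at most $n$ steps. A certificate is therefore the ordered list of compressions, each step recorded as the vertex set matched to a motif of $\mathcal{F}$ together with the resulting contraction; checking that each such set carries a copy of a motif and that the final graph equals $H$ is polynomial, so FGC $\in$ \texttt{NP}.

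For hardness I would exploit the observation that when a motif is forced to cover most or all of $V(G)$, the compression producing $H$ is essentially unique and encodes a global decomposition of $G$. \textbf{Part 1} reduces from Hamiltonian Cycle: with $\mathcal{F}=\{C_n\}$ the only admissible match spans all $n$ vertices, so a compression to the single node exists iff $G$ has a spanning cycle; contracting a Hamiltonian cycle gives $H$ in one step, and conversely any one-step reduction to $H$ exhibits such a cycle. \textbf{Part 2} reduces from Partition into Triangles: the motif of $k$ disjoint triangles occupies all $3k$ vertices, so contracting it to the single node $H$ is possible precisely when $V(G)$ splits into $k$ vertex-disjoint triangles. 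Both source problems are classical \texttt{NP}-complete problems, so both restrictions are \texttt{NP}-hard.

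\textbf{Part 3} is the general regime and, I expect, the main obstacle. Here $H$ is edgeless and $\mathcal{F}$ is an arbitrary family of graphlets, so the compression now proceeds over many steps and the motifs need not exhaust $V(G)$ at once. I would reduce from a partition-into-fixed-subgraph problem --- \texttt{NP}-complete for any fixed connected graphlet on at least three vertices --- building $G$ as a disjoint union of gadgets so that the source instance is solvable iff each gadget can be independently contracted to a single node, yielding one isolated vertex of $H$ per gadget. The delicate point, absent from Parts 1--2, is that contracting a graphlet can create edges between freshly formed super-nodes; I must design the gadgets and the family $\mathcal{F}$ so that (i) no such edge ever enables a spurious later match and (ii) the requirement that $H$ be edgeless forces the chosen graphlets to be pairwise non-adjacent, thereby pinning the decomposition. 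Engineering the family and the adjacencies so that the only legal compression sequence mirrors a valid partition is where the real work lies; Parts 1 and 2 sidestep it because their single all-spanning motif collapses the entire process into one deterministic contraction.
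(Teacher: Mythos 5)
Your NP-membership argument and your reductions for parts (1) and (2) are correct and are essentially the standard ones (this paper does not reprove those two parts; it cites them). The issue is part (3), which is the only part this paper actually proves, and there your proposal stops exactly where the proof has to begin. You correctly diagnose the two dangers --- contractions creating edges between super-nodes that enable spurious later matches, and the need to force the chosen graphlets to be pairwise non-adjacent --- but you leave the gadget design as ``where the real work lies.'' That is the whole content of the theorem: without a concrete family $\mathcal{F}$ and graph $G$ for which you can prove the only successful compression sequences correspond to solutions of the source problem, there is no proof. Moreover, reducing from partition-into-copies-of-a-\emph{single} fixed graphlet, as you suggest, underuses the one freedom part (3) gives you (an arbitrary \emph{family} of graphlets) and makes the spurious-match problem genuinely hard to control.

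The paper's construction shows how cheaply both of your worries can be eliminated simultaneously. It reduces from Exact Cover by Three Sets: each element $x_i$ of the ground set $X=\{x_1,\dots,x_{3k}\}$ is encoded as a cycle $C_{i+2}$ of a \emph{unique} length, $G$ is the disjoint union of these $3k$ cycles, each set $s_j=\{x_{j_1},x_{j_2},x_{j_3}\}$ becomes a graphlet $Z_j = C_{f(j_1)}\cup C_{f(j_2)}\cup C_{f(j_3)}$ with $f(i)=i+2$, and $H$ is $k$ isolated vertices. Because the cycles have pairwise distinct lengths and $G$ has no edges between gadgets, each $Z_j$ occurs at most once, a $Z_j$-compression collapses exactly its three cycles to one isolated vertex (never creating a new edge), and a cycle already consumed can never be matched again. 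Hence the compression reaches $k$ isolated vertices iff $k$ of the sets exactly cover $X$. The lesson relative to your plan: choosing gadgets that are mutually disjoint and mutually non-isomorphic makes conditions (i) and (ii) of your sketch hold vacuously, whereas a single-graphlet partition reduction would have to fight them head-on.
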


In this work, we provide an easier proof for the third setting. 

\section{Notation and Terminology}

We adopt the same notation and terminology as in~\cite{ahmed2020interpretable}. The relevant preliminaries have been reiterated below.

\subsection{Preliminaries}

A graph $G$ is a collection of nodes $V$ and edges $E\subseteq V\times V$ i.e. pairwise interactions between pairs of nodes. 
For a node $u$, its neighborhood $N(u)$ is defined as the set of all nodes $v\in V$ such that there exists an edge $(u,v)$ in $E$. The degree $d(u)$ is defined as the size of the neighborhood of a node $u$.
$G$ is undirected and unweighted, i.e. for $u,v\in V$, an edge $(u,v)$ is same as the edge $(v,u)$. 
For a fixed graph $G=(V,E)$, a given $F = (V_F, E_F)$ is called a \textit{motif} of $G$,
if $F$ is isomorphic to a sub-graph in $G$ i.e. $F$ is a motif if there exists $V'\subset V$ and a function $\phi : V_F \rightarrow V'$ such that for all edges $(u,v)\in E_F$ there is an edge $(\phi(u), \phi(v))\in E$. Similarly, $F = (V_F, E_F)$ is called a \textit{graphlet} of $G$,
if $F$ is isomorphic to an \textit{induced} sub-graph in $G$ i.e. $F$ is a graphlet if there exists $V'\subset V$ and a function $\phi : V_F \rightarrow V'$ such that for all edges $(u,v)\in E_F$ \textit{if and only if} there is an edge $(\phi(u), \phi(v))\in E$. We will use the term motif (and similarly graphlet) for both $F$ and any of its isomorphic copies in $G$.

For a given equivalence relation $\sim$ on the set nodes of a graph $G$, the quotient graph, denoted by $G\big/\sim$, is a graph where the node set is the set of equivalence classes defined by $\sim$ and there is an edge between a pair of nodes (classes) if and only if there is an edge between any pair of nodes of two corresponding classes in $G$. Intuitively, in quotient graphs, prescribed subsets of nodes are merged and the incidence is preserved without creating multi-edges~\cite{golumbic2006graph}.
We will repeatedly deal with graphs with names $G,H$, and $F_i$; their node and edge set will, respectively, be denoted by $(V_G,E_G)$, $(V_H,E_H)$ and $(V_{F_i},E_{F_i})$. Finally, for a set $V$ and a positive integer $c$, $\binom{V}{c}$ is defined as the set of all size subsets of $V$ with exactly $c$ elements.

\subsection{Familial Graph Compression}

We start by defining an equivalence relation on the node set $V$ of $G$ based on a motif (or a graphlet) $F$. Consider the relation $R_F$ where node $u$ is related to $v$ whenever both $u$ and $v$ lie in a sub-graph of $G$ isomorphic to $F$. We define $\sim_F$ to be the transitive closure of $R_F$. 
Intuitively, if two motifs (resp. graphlets) share a common node in $G$, then all nodes in both motifs (resp. graphlets) are related in $\sim_F$. Clearly, $\sim_F$ is an equivalence relation on $V$. Then, an $F$-\textit{compression step} (referred to as compression step when $F$ is clear from the context) is defined as computing the quotient graph $G\big/\sim_F$. Recall that a quotient graph $G\big/\sim_F$ is a graph on classes in the partition $\sim_F$, where two classes are adjacent if any pair of nodes in the corresponding classes are adjacent in the graph $G$.
The \textit{familial compression} of a graph $G$ for a family $\mathcal{F}$ is the process of repeatedly applying $F_i$-compression steps on $G$ where after each step $G$ is replaced by the quotient graph of the previous step. Thus, we say that a graph $H$ can be constructed by a $\mathcal{F}$-compression of $G$ if there exist a sequence of graphs:
$[G^0\;\;G^1\;\;G^2\;\ldots \;G^k=H]$ where $G^0=G$ and $G^j=G^{j-1}\big/\sim_{F_{i}}$ i.e. $G^j$ is result of an $F_i$-compression on the graph $G^{j-1}$ for some $F_i\in \mathcal{F}$. Note, that a graph $H$ may be constructed in several different ways via different compression steps. To avoid trivial compressions, we restrict that each $F\in \mathcal{F}$ contains at least three nodes. 
The following is the FGC problem:
\begin{problem}[Familial Graph Compression]
Given simple graphs, $G$, and $H$, and a family of motifs (or graphlets) $\mathcal{F}$, can $H$ be constructed from a $\mathcal{F}$-compression of $G$?
\end{problem}

\section{Result}

In the original proof for Theorem~\ref{thm:FGC_NPC}-(3), a reduction is provided from a variant of the 3-SAT problem to FGC. In this section we showcase the same result via reduction from Exacty Cover by Three Sets (XC3), defined below.

\begin{problem}[Exact Cover by Three Sets~\cite{gonzalez1985clustering}]
Let $X = \{x_1, x_2, \ldots, x_{3k}\}$, and let $S$ be a collection of 3-element subsets of $X$, in which no element in $X$ appears in more than three subsets. For $s_j \in S, s_j = \{ x_{j_1}, x_{j_2}, x_{j_3} \}$. The problem consists of determining whether $S$ has an exact cover for $X$, i.e. a $S' \subseteq S$ such that every element in $X$ occurs in exactly one member of $S'$.
\end{problem}

This problem was proven to be \texttt{NP-complete} in~\cite{gonzalez1985clustering}. Note that for our reduction, the fact that ``each element appears in no more than three subsets'' is inconsequential. 

\begin{theorem}

\label{thm:fgc-xc3}
XC3 $\leq_{P}$ FGC.
\end{theorem}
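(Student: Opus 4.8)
The plan is to give a polynomial-time many-one reduction that turns an XC3 instance $(X,S)$ into an FGC instance $(G,H,\mathcal{F})$ of the type in Theorem~\ref{thm:FGC_NPC}-(3), so that $H$ is reachable from $G$ precisely when $S$ admits an exact cover. First I would build $G$ by introducing one \emph{element node} $e_i$ for each $x_i\in X$ and, for each set $s_j=\{x_{j_1},x_{j_2},x_{j_3}\}$, a private \emph{widget} $W_j$ attached only to $e_{j_1},e_{j_2},e_{j_3}$. The widget carries a unique identifier (for instance a pendant clique or path whose size depends on $j$) so that the induced subgraph $F_j$ spanned by $W_j$ together with its three element nodes is pairwise non-isomorphic across $j$ and occurs in $G$ \emph{exactly once}, namely at $s_j$ itself. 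I would then set $\mathcal{F}=\{F_1,\dots,F_{|S|}\}$ and let $H$ be a fixed edgeless graph (a forest of isolated nodes) whose node count equals the number of components that survive after an exact cover has been compressed.

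The correctness argument has two directions. For the forward direction, given an exact cover $S'\subseteq S$, I would compress the graphlets $\{F_j: s_j\in S'\}$ in an arbitrary order. Because the sets in $S'$ are pairwise disjoint, each $F_j$ is still present as an induced subgraph when its turn comes, and since $F_j$ occurs uniquely, the step $G\big/\sim_{F_j}$ merges exactly the nodes of that single copy into one super-node; disjointness guarantees these merges never interact, and the covering property guarantees that every element node is absorbed, leaving the prescribed edgeless graph. For the reverse direction, I would argue that any sequence of compressions reaching an edgeless $H$ must correspond to an exact cover: the \emph{induced} nature of graphlets is the key enforcement, because as soon as an element node $e_i$ is merged into a super-node its neighborhood strictly grows, which destroys every other copy $F_{j'}$ passing through $e_i$; hence the widgets one may ever apply form a family of sets that is pairwise element-disjoint, and reaching the target node count with no surviving edges forces this family to cover all of $X$.

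The main obstacle, and the technical heart of the construction, is the joint design of the widget and the target $H$. Two things must be checked carefully. First, \emph{selectivity}: the identifier attached to each $W_j$ must guarantee that $F_j$ is genuinely unique and that no accidental isomorphic copy of any $F_j$ appears elsewhere in $G$, so that a compression step can never merge more than the intended set; verifying this for induced subgraphs, rather than mere subgraphs, is where most of the casework lies. Second, \emph{terminality}: I must ensure that the graph obtained after compressing an exact cover is exactly the edgeless $H$, which means accounting for the widgets of the \emph{non-selected} sets and making sure their residual nodes and edges either collapse or are absent; this is the delicate point, since a non-selected widget's element nodes may be consumed by several different super-nodes and could naively leave connecting edges behind. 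I expect the cleanest resolution to make each widget self-contained enough that its identifier nodes carry no edges outside the intended super-node, so that once the elements are covered the leftover structure is already edgeless. Confirming this, together with the polynomial size of $G$, $H$, and $\mathcal{F}$, completes the reduction.
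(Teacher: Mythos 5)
The obstacle you flag at the end --- \emph{terminality} --- is a genuine gap, not a detail that can be checked away: in your construction the widgets of the \emph{non-selected} sets necessarily survive every compression step and leave edges behind, so the final graph cannot be the edgeless $H$. Concretely, if $s_j \notin S'$ then no compression ever uses $F_j$, so the identifier nodes of $W_j$ are never merged with anything; whatever edges $W_j$ carries internally, plus its attachment edges to $e_{j_1},e_{j_2},e_{j_3}$ (which now sit inside up to three distinct super-nodes produced by the selected sets), persist in the quotient graph. Your proposed fix --- making each widget ``self-contained enough that its identifier nodes carry no edges outside the intended super-node'' --- cannot work, because for a non-selected $s_j$ there is no intended super-node: its three element nodes are absorbed by \emph{other} sets' compressions, and the only mechanism FGC offers for deleting an edge is to merge its endpoints into one equivalence class, which happens only inside a copy of some $F_{j'}$, none of which contain the nodes of $W_j$. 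A secondary symptom of the same defect is that the number of surviving components and their adjacencies depend on \emph{which} exact cover is chosen, so $H$ cannot even be fixed in advance of knowing the answer.

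The repair is to remove the set gadgets from $G$ altogether, which is exactly what the paper does: $G$ consists only of element gadgets, namely one cycle $C_{i+2}$ per element $x_i$ (the distinct length plays the role of your unique identifier), and each set $s_j=\{x_{j_1},x_{j_2},x_{j_3}\}$ lives only in the family $\mathcal{F}$, as the single graphlet $Z_j = C_{j_1+2}\cup C_{j_2+2}\cup C_{j_3+2}$, a disjoint union of the three corresponding cycles. Compressing $Z_j$ collapses three entire components into one isolated vertex, a covered element's cycle disappears and blocks reuse, and nothing is ever left over, so the target is exactly $k$ isolated vertices as required by Theorem~\ref{thm:FGC_NPC}-(3). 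Your forward and reverse arguments (uniqueness of each copy, destruction of all other copies through a merged element) are sound in spirit and transfer to that construction, but as written your reduction does not terminate in a forest of isolated nodes and therefore does not establish the theorem.
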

\begin{proof}
Suppose we are given an instance of XC3, i.e. the sets $X$ and $S$. We show how one can make graphs $G$, and $H$, and family $\mathcal{F}$ for an FGC instance that is solvable only if the given XC3 instance is solvable.

Let $C_i$ denote a cycle on $i$ vertices. Let $f(i) = i + 2$ for $i \in \{ 1,2,3,\ldots \}$. The graph $G$ is the union of $3k$ disjoint cycles: $G = \bigcup_{x_i \in X} C_{f(i)}$.
% , i.e. for each element $x_i \in X$, $G$ includes $C_{f(i)}$. 
For each $s_j \in S$, we define a graph $Z_j$ which is the union of three disjoint cycles: $Z_j = C_{f(j_1)} \cup C_{f(j_2)} \cup C_{f(j_3)}$. The family $\mathcal{F}$ contains $Z_j$ for each $s_j \in S$: $\mathcal{F} = \bigcup_{s_j \in S} Z_j$.
Finally, the target graph $H$ is a graph on $k$ isolated vertices, i.e. $|V_H| = k$, and $E_H = \emptyset$.

Intuitively, when a $Z_j$ is compressed in $G$, it corresponds to selecting a $c_j \in S$ to form an exact cover for $X$.
Observe that FGC would not allow the same element to be covered by different $c_j$'s, since the cycle corresponding to the covered elements no longer exist in the quotient graph, and thereby can't be compressed (selected) again. We get $k$ isolated vertices if an only if $k$ disjoint 3-element subsets form an exact cover of $X$. Clearly, the reduction can be performed in polynomial time.
\end{proof}

Observe that the $G$, $H$, and $\mathcal{F}$ used in Theorem~\ref{thm:fgc-xc3} are exactly as described in Theorem~\ref{thm:FGC_NPC}-(3). We note that this reduction holds even when $\mathcal{F}$ is a family of motifs. We also obvserve that some simple changes to the provided reduction can be made to show the following:
\begin{theorem}
\label{thm:fgc-4}
FGC is \texttt{NP-complete} when $G$ is a connected, simple graph, H is the single node graph, and $\mathcal{F}$ is a family of graphlets or motifs.
\end{theorem}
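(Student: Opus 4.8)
The plan is to reuse the disjoint-cycles reduction of Theorem~\ref{thm:fgc-xc3} and patch it with two local modifications, keeping $\mathcal{F}$ valid under both the motif and graphlet readings. To make $G$ connected I would introduce a single hub vertex $h$ and, for each element-cycle $C_{f(i)}$, attach $h$ to a designated port $p_i \in C_{f(i)}$ by a fresh, internally disjoint path (a \emph{buffer}) of a common length $\ell$, chosen large enough that $2\ell > 3k+2$. Since each cycle is joined to $h$ through a single such path, the resulting graph $G'$ is a cactus: every edge lies on at most one cycle, the only cycles are the original $C_{f(i)}$, and no two vertices are joined by three internally disjoint paths. The buffers are bridges, so they neither create, destroy, nor merge any $C_{f(i)}$; hence each $Z_j$ still has a unique copy, formed by the three element-cycles of $s_j$, and a $Z_j$-compression contracts exactly those three cycles to one \emph{token} vertex, precisely as before. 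The construction clearly remains polynomial, adding $O(k^2)$ buffer vertices and one motif.

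To force the target to a single vertex I would add one finishing motif $\Theta$ to $\mathcal{F}$: two terminals joined by three internally disjoint paths of length $\ell$. Its purpose is to be inert on $G'$ (a cactus contains no such subgraph, induced or not) yet to appear the instant a token is created: when $Z_j$ fuses the three ports into a token $z$, the three length-$\ell$ buffers become three internally disjoint $h$--$z$ paths, i.e.\ a copy of $\Theta$ sharing the terminal $h$. Because every $\Theta$-copy contains $h$, the transitive closure of a single $\Theta$-compression merges $h$ together with \emph{all} current tokens and buffer-interior vertices at once. Thus, after the $k$ $Z_j$-compressions of an exact cover have removed every element-cycle, one $\Theta$-compression collapses the graph to the single vertex $H$. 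This yields the completeness direction: an exact cover gives a compression sequence reaching $H$.

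The hard part, and the step I expect to be the main obstacle, is soundness: showing that reaching a single vertex forces an exact cover despite arbitrary interleavings of the two compression types. I would establish the invariant that every reachable graph is a hub with pendant blocks (disjoint except at the hub), each block being either an untouched element-cycle attached by one buffer or a token attached by a $\Theta$-bundle. Two facts then close the argument. First, the gap $2\ell > 3k+2$ forces every cycle hidden inside a $\Theta$-block to have length $2\ell$, strictly larger than every $f(i)$, so no $Z_j$ can ever match a spurious cycle in the finishing gadget; consequently an element-cycle can be eliminated only by a genuine $Z_j$-compression contracting it. Second, a single buffer makes $h$ a cut vertex for each cycle-block, so no pair involving a cycle interior admits three internally disjoint length-$\ell$ paths; hence a $\Theta$-compression acts only on tokens and the hub and never erodes a cycle. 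Removing all $3k$ element-cycles therefore requires pairwise disjoint $Z_j$-compressions covering every element, i.e.\ an exact cover. The delicate point is making the invariant and the ``no spurious $\Theta$, no spurious short cycle'' claims rigorous for the subgraph and induced-subgraph readings simultaneously; the cactus structure of $G'$ together with the length gap $2\ell > 3k+2$ are exactly the levers that make both readings behave identically.
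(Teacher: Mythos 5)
Your construction is sound, and in fact you have supplied more than the paper does: for Theorem~\ref{thm:fgc-4} the paper offers no proof at all, only the remark that ``simple changes'' to the reduction of Theorem~\ref{thm:fgc-xc3} suffice, so there is no authorial argument to compare against. Your hub-and-buffer design is a legitimate realization of that remark. The two levers you identify are exactly the right ones: because $G'$ is a cactus in which every cycle is an original $C_{f(i)}$ and the lengths $f(1),\dots,f(3k)$ are pairwise distinct and chordless, each $Z_j$ retains a unique copy under both the motif and graphlet readings; and because every cycle created inside a $\Theta$-bundle has length exactly $2\ell>3k+2$, no $Z_j$ can ever match a spurious cycle, while the hub being a cut vertex for each pendant block caps the number of internally disjoint paths at two for any pair involving a cycle interior, so $\Theta$ is inert until a token exists and never erodes an element-cycle. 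With those two facts the soundness direction reduces, as in Theorem~\ref{thm:fgc-xc3}, to the observation that an element-cycle can only be destroyed by an effective $Z_j$-compression containing it, and effective $Z_j$-compressions must be pairwise disjoint, forcing an exact cover.

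Two small points to tighten. First, your sentence claiming that a $\Theta$-compression merges $h$ with \emph{all} buffer-interior vertices is only correct once every element-cycle has been tokenized; at intermediate stages the buffers leading to untouched cycles lie in no $\Theta$-copy and survive (your own invariant paragraph handles this correctly, so it is a matter of phrasing, not substance). Second, the theorem asserts \texttt{NP-completeness}, and your argument, like the paper's Theorem~\ref{thm:fgc-xc3}, establishes only hardness; membership in \texttt{NP} is inherited from the framework of Theorem~\ref{thm:FGC_NPC} and is worth a one-line acknowledgment. Neither point affects the correctness of the reduction.
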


\bibliography{main}
\bibliographystyle{ieeetr}

\end{document}